\newcommand{\ket}[1]{| #1 \rangle}
\newcommand{\ip}[2]{\langle #1|#2 \rangle}
\newcommand{\bracket}[3]{\langle #1|#2|#3 \rangle}
\newcommand{\uinvnorm}{|\kern-1pt|\kern-1pt|}
\newtheorem{thm}{Theorem}
\newtheorem*{thm*}{Theorem}
\newtheorem{lem}[thm]{Lemma}
\newtheorem*{lem*}{Lemma}
\newtheorem{fact}[thm]{Fact}
\newcommand{\be}{\begin{equation}}
\newcommand{\ee}{\end{equation}}
\newcommand{\bea}{\begin{eqnarray}}
\newcommand{\eea}{\end{eqnarray}}
\newcommand{\bes}{\begin{equation*}}
\newcommand{\ees}{\end{equation*}}
\newcommand{\beas}{\begin{eqnarray*}}
\newcommand{\eeas}{\end{eqnarray*}}
\newcommand{\Z}{\mathbb{Z}}
\title{Nonadaptive quantum query complexity}
\author{Ashley Montanaro\footnote{Department of Computer Science, University of Bristol, Woodland Road, Bristol, BS8 1UB, UK; {\tt montanar@cs.bris.ac.uk}.}}
\date{\today}
\begin{document}
	
\maketitle

\begin{abstract}
We study the power of nonadaptive quantum query algorithms, which are algorithms whose queries to the input do not depend on the result of previous queries. First, we show that any bounded-error nonadaptive quantum query algorithm that computes some total boolean function depending on $n$ variables must make $\Omega(n)$ queries to the input in total. Second, we show that, if there exists a quantum algorithm that uses $k$ nonadaptive oracle queries to learn which one of a set of $m$ boolean functions it has been given, there exists a nonadaptive classical algorithm using $O(k \log m)$ queries to solve the same problem. Thus, in the nonadaptive setting, quantum algorithms can achieve at most a very limited speed-up over classical query algorithms.
\end{abstract}


\section{Introduction}

Many of the best-known results showing that quantum computers outperform their classical counterparts are proven in the {\em query complexity} model. This model studies the number of queries to the input $x$ which are required to compute some function $f(x)$. In this work, we study two broad classes of problem that fit into this model.

In the first class of problems, {\em computational} problems, one wishes to compute some boolean function $f(x_1,\dots,x_n)$ using a small number of queries to the bits of the input $x \in \{0,1\}^n$. The query complexity of $f$ is the minimum number of queries required for any algorithm to compute $f$, with some requirement on the success probability. The deterministic query complexity of $f$, $D(f)$, is the minimum number of queries that a deterministic classical algorithm requires to compute $f$ with certainty. $D(f)$ is also known as the decision tree complexity of $f$. Similarly, the randomised query complexity $R_2(f)$ is the minimum number of queries required for a randomised classical algorithm to compute $f$ with success probability at least $2/3$. The choice of $2/3$ is arbitrary; any constant strictly between $1/2$ and 1 would give the same complexity, up to constant factors.

There is a natural generalisation of the query complexity model to quantum computation, which gives rise to the exact and bounded-error quantum query complexities $Q_E(f)$, $Q_2(f)$ (respectively). In this generalisation, the quantum algorithm is given access to the input $x$ through a unitary oracle operator $O_x$. Many of the best-known quantum speed-ups can be understood in the query complexity model. Indeed, it is known that, for certain partial functions $f$ (i.e.\ functions where there is a promise on the input), $Q_2(f)$ may be exponentially smaller than $R_2(f)$ \cite{simon97}. However, if $f$ is a total function, $D(f) = O(Q_2(f)^6)$~\cite{beals01}. See \cite{buhrman02,hoyer05} for good reviews of quantum and classical query complexity.

In the second class of problems, {\em learning} problems, one is given as an oracle an unknown function $f^?(x_1,\dots,x_n)$, which is picked from a known set $\mathcal{C}$ of $m$ boolean functions $f: \{0,1\}^n \rightarrow \{0,1\}$. These functions can be identified with $n$-bit strings or subsets of $[n]$, the integers between 1 and $n$. The goal is to determine which of the functions in $\mathcal{C}$ the oracle $f^?$ is, with some requirement on the success probability, using the minimum number of queries to $f^?$. Note that the success probability required should be strictly greater than $1/2$ for this model to make sense.

Borrowing terminology from the machine learning literature, each function in $\mathcal{C}$ is known as a {\em concept}, and $\mathcal{C}$ is known as a concept class \cite{servedio01}. We say that an algorithm that can identify any $f \in \mathcal{C}$ with worst-case success probability $p$ {\em learns} $\mathcal{C}$ with success probability $p$. This problem is known classically as exact learning from membership queries \cite{angluin88,servedio01}, and also in the literature on quantum computation as the oracle identification problem \cite{ambainis04a}. Many interesting results in quantum algorithmics fit into this framework, a straightforward example being Grover's quantum search algorithm \cite{grover97}. It has been shown by Servedio and Gortler that the speed-up that may be obtained by quantum query algorithms in this model is at most polynomial \cite{servedio01}.


\subsection{Nonadaptive query algorithms}
\label{sec:nonadaptintro}

This paper considers query algorithms of a highly restrictive form, where oracle queries are not allowed to depend on previous queries. In other words, the queries must all be made at the start of the algorithm. We call such algorithms {\em nonadaptive}, but one could also call them {\em parallel}, in contrast to the usual serial model of query complexity, where one query follows another. It is easy to see that, classically, a deterministic nonadaptive algorithm that computes a function $f:\{0,1\}^n \rightarrow \{0,1\}$ which depends on all $n$ input variables must query all $n$ variables $(x_1,\dots,x_n)$. Indeed, for any $1 \le i \le n$, consider an input $x$ for which $f(x) = 0$, but $f(x \oplus e^i)=1$, where $e^i$ is the bit string which has a 1 at position $i$, and is 0 elsewhere. Then, if the $i$'th variable were not queried, changing the input from $x$ to $x \oplus e^i$ would change the output of the function, but the algorithm would not notice.

In the case of learning, the exact number of queries required by a nonadaptive deterministic classical algorithm to learn any concept class $\mathcal{C}$ can also be calculated. Identify each concept in $\mathcal{C}$ with an $n$-bit string, and imagine an algorithm $\mathcal{A}$ that queries some subset $S \subseteq [n]$ of the input bits. If there are two or more concepts in $\mathcal{C}$ that do not differ on any of the bits in $S$, then $\mathcal{A}$ cannot distinguish between these two concepts, and so cannot succeed with certainty. On the other hand, if every concept $x \in \mathcal{C}$ is unique when restricted to $S$, then $x$ can be identified exactly by $\mathcal{A}$. Thus the number of queries required is the minimum size of a subset $S \subseteq [n]$ such that every pair of concepts in $\mathcal{C}$ differs on at least one bit in $S$.

We will be concerned with the speed-up over classical query algorithms that can be achieved by nonadaptive quantum query algorithms. Interestingly, it is known that speed-ups can indeed be found in this model. In the case of computing partial functions, the speed-up can be dramatic; Simon's algorithm for the hidden subgroup problem over $\Z_2^n$, for example, is nonadaptive and gives an exponential speed-up over the best possible classical algorithm \cite{simon97}. There are also known speed-ups for computing total functions. For example, the parity of $n$ bits can be computed exactly using only $\lceil n/2 \rceil$ nonadaptive quantum queries \cite{farhi98a}. More generally, {\em any} function of $n$ bits can be computed with bounded error using only $n/2 + O(\sqrt{n})$ nonadaptive queries, by a remarkable algorithm of van Dam \cite{vandam98}. This algorithm in fact retrieves {\em all} the bits of the input $x$ successfully with constant probability, so can also be seen as an algorithm that learns the concept class consisting of all boolean functions on $n$ bits using $n/2 + O(\sqrt{n})$ nonadaptive queries.

Finally, one of the earliest results in quantum computation can be understood as a nonadaptive learning algorithm. The quantum algorithm solving the Bernstein-Vazirani parity problem \cite{bernstein97} uses one query to learn a concept class of size $2^n$, for which any classical learning algorithm requires $n$ queries, showing that there can be an asymptotic quantum-classical separation for learning problems.


\subsection{New results}

We show here that these results are essentially the best possible. First, any nonadaptive quantum query algorithm that computes a total boolean function with a constant probability of success greater than $1/2$ can only obtain a constant factor reduction in the number of queries used. In particular, if we restrict to nonadaptive query algorithms, then $Q_2(f) = \Theta(D(f))$. In the case of exact nonadaptive algorithms, we show that the factor of 2 speed-up obtained for computing parity is tight. More formally, our result is the following theorem.

\begin{thm}
\label{thm:nonadaptive}
Let $f:\{0,1\}^n \rightarrow \{0,1\}$ be a total function that depends on all $n$ variables, and let $\mathcal{A}$ be a nonadaptive quantum query algorithm that uses $k$ queries to the input to compute $f$, and succeeds with probability at least $1-\epsilon$ on every input. Then
\[ k \ge \frac{n}{2} \left(1 - 2 \sqrt{\epsilon(1-\epsilon)} \right). \]
\end{thm}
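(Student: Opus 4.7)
The plan is to show that, for each variable $i$, the algorithm must put substantial probability mass on querying position $i$, and then sum these lower bounds; the total mass is at most $k$ by a simple counting argument, which yields the theorem.

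Concretely, write the fixed initial state of the algorithm as $\ket{\psi} = \sum_{I \in [n]^k} \ket{I} \otimes \ket{\phi_I}$, where $I = (i_1,\ldots,i_k)$ lists the $k$ parallel query positions and $\ket{\phi_I}$ is a subnormalised state on the $k$ answer registers together with any ancilla, with $\sum_I \|\phi_I\|^2 = 1$. After the parallel XOR queries, the state on input $x$ becomes $\ket{\psi_x} = \sum_I \ket{I} \otimes X^{x_I}\ket{\phi_I}$, where $X^{x_I}$ flips each answer register $j$ for which $x_{i_j}=1$. Any further unitary processing before the final measurement preserves inner products, so we may work with $\ket{\psi_x}$ directly. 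Writing $\chi_I^{(i)} \in \{0,1\}^k$ for the indicator of those $j$ with $i_j = i$, the self-inverse identity $(X^{x_I})^\dagger X^{x_I \oplus \chi_I^{(i)}} = X^{\chi_I^{(i)}}$ makes
\[ c_i := \ip{\psi_x}{\psi_{x\oplus e^i}} = \sum_I \bracket{\phi_I}{X^{\chi_I^{(i)}}}{\phi_I} \]
real and independent of $x$.

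Set $B_i := \sum_{I:\, i \in I} \|\phi_I\|^2$, the total weight on query tuples touching position $i$. Terms with $i \notin I$ contribute $\|\phi_I\|^2$ to $c_i$, while the remaining terms contribute at least $-\|\phi_I\|^2$ each (as $X^{\chi_I^{(i)}}$ has spectrum $\{-1,+1\}$), giving $c_i \geq 1 - 2B_i$. On the other hand, since $f$ depends on variable $i$ there exist inputs $x, x\oplus e^i$ with $f(x) \neq f(x \oplus e^i)$; the bounded-error hypothesis forces trace distance at least $1-2\epsilon$ between the corresponding output states, and the Fuchs--van de Graaf inequality for pure states then yields $|c_i| \leq 2\sqrt{\epsilon(1-\epsilon)}$. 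Together these imply $B_i \geq \tfrac{1}{2}(1 - 2\sqrt{\epsilon(1-\epsilon)})$ for every $i$. Finally, each $I \in [n]^k$ contains at most $k$ distinct positions, so $\sum_{i=1}^n B_i \leq k$, and combining with the per-$i$ bound produces the claimed $k \geq \frac{n}{2}(1 - 2\sqrt{\epsilon(1-\epsilon)})$. The only step that requires genuine care is establishing the $x$-independence of $c_i$, which relies on both the XOR oracle being its own inverse and the absence of intermediate unitaries; the rest is mechanical, and the bound is tight at $\epsilon=0$ against the $\lceil n/2\rceil$-query parity algorithm mentioned earlier.
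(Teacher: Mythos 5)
Your proof is correct and follows essentially the same adversary-style argument as the paper: lower-bound a per-variable query weight via the near-orthogonality forced by distinguishing the states for $x$ and $x \oplus e^i$, then observe that these weights sum to at most $k$. The only cosmetic differences are that you use the bit-flip oracle and the weight $B_i$ (mass on tuples containing $i$ at least once) where the paper uses the phase oracle and the mass on tuples containing $i$ an odd number of times, and your Fuchs--van de Graaf step is exactly the paper's Fact \ref{fact:distinguish}.
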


In the case of learning, we show that the speed-up obtained by the Bernstein-Vazirani algorithm \cite{bernstein97} is asymptotically tight. That is, the query complexities of quantum and classical nonadaptive learning are equivalent, up to a logarithmic term. This is formalised as the following theorem.

\begin{thm}
\label{thm:learning}
Let $\mathcal{C}$ be a concept class containing $m$ concepts, and let $\mathcal{A}$ be a nonadaptive quantum query algorithm that uses $k$ queries to the input to learn $\mathcal{C}$, and succeeds with probability at least $1-\epsilon$ on every input, for some $\epsilon < 1/2$. Then there exists a classical nonadaptive query algorithm that learns $\mathcal{C}$ with certainty using at most
\[ \frac{4k\log_2 m}{1 - 2\sqrt{\epsilon(1-\epsilon)}} \]
queries to the input.
\end{thm}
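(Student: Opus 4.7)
The plan is to extract from the nonadaptive quantum algorithm a single probability distribution $\mu$ on $[n]$ with the property that, for every pair of distinct concepts $f \ne g \in \mathcal{C}$, the disagreement set $D_{fg} = \{i : f(i) \ne g(i)\}$ carries $\mu$-mass at least some explicit constant $\gamma > 0$. A randomised classical nonadaptive algorithm that draws $T \approx (\log m)/\gamma$ positions i.i.d.\ from $\mu$ will then distinguish every pair of concepts with positive probability by a union bound, and the existence of some successful sample gives a fixed set of $T$ positions which a deterministic classical algorithm can use to learn $\mathcal{C}$ with certainty.

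To define $\mu$, I work in the phase-oracle model and write the algorithm as $U_2\, O_x^{\otimes k}\, U_1|0\rangle$. Let $|\psi_0\rangle = U_1|0\rangle = \sum_{\vec i,w}\alpha_{\vec i,w}|\vec i\rangle_Q|w\rangle_W$, where $Q$ stores the $k$ query indices and $W$ is a workspace. For any pair $f \ne g$, the algorithm's final measurement distinguishes the post-query states $|\phi_f\rangle = O_f^{\otimes k}|\psi_0\rangle$ and $|\phi_g\rangle = O_g^{\otimes k}|\psi_0\rangle$ with average success probability at least $1-\epsilon$, so by the Fuchs--van de Graaf inequality
\[
|\langle\phi_f|\phi_g\rangle| \le 2\sqrt{\epsilon(1-\epsilon)}.
\]

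The key calculation uses $O_f^\dagger O_g = O_{f \oplus g}$ (which holds in the phase-oracle model, since $(-1)^{f(i)+g(i)} = (-1)^{(f\oplus g)(i)}$) to rewrite this inner product as
\[
\langle\phi_f|\phi_g\rangle = \langle\psi_0|O_{f \oplus g}^{\otimes k}|\psi_0\rangle = \sum_{\vec i,w}|\alpha_{\vec i,w}|^2 (-1)^{|\vec i \cap D_{fg}|} = \E_{\vec i \sim p}\bigl[(-1)^{|\vec i \cap D_{fg}|}\bigr],
\]
where $p(\vec i) = \sum_w |\alpha_{\vec i,w}|^2$ and $|\vec i \cap D|$ counts how many coordinates of $\vec i$ land in $D$ (with multiplicity). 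The fidelity bound therefore forces $\Pr_p[|\vec i \cap D_{fg}| \text{ odd}] \ge (1 - 2\sqrt{\epsilon(1-\epsilon)})/2$; since an odd intersection is nonempty, a union bound over the $k$ query registers yields $\sum_{j=1}^k \Pr_p[i_j \in D_{fg}] \ge (1 - 2\sqrt{\epsilon(1-\epsilon)})/2$. Defining $\mu$ to be the average (over $j\in[k]$) of the marginal distribution of $i_j$ under $p$, this gives $\mu(D_{fg}) \ge \gamma := (1 - 2\sqrt{\epsilon(1-\epsilon)})/(2k)$ for every pair of concepts.

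The remaining steps are routine. Drawing $T$ positions i.i.d.\ from $\mu$, the chance of missing a fixed $D_{fg}$ is at most $(1-\gamma)^T \le e^{-\gamma T}$, and a union bound over the $\binom{m}{2} < m^2/2$ pairs keeps the total failure probability strictly below $1$ as soon as $T \ge 2(\ln m)/\gamma$. Some particular outcome must therefore hit every $D_{fg}$, so the deterministic algorithm that queries exactly that set of $T$ positions learns $\mathcal{C}$ with certainty; substituting $\gamma$ and using $\ln m \le \log_2 m$ recovers the stated bound $4k\log_2 m / (1 - 2\sqrt{\epsilon(1-\epsilon)})$. The main technical hurdle is the third paragraph, namely converting the fidelity bound into a usable lower bound on $\mu(D_{fg})$ via the odd-parity trick; once this distribution is in hand, the classical covering argument is entirely standard.
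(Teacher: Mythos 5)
Your proposal is correct, and it reaches the stated bound by a route that differs structurally from the paper's. The paper first converts the $k$-query algorithm into a \emph{one}-query algorithm for the product concept class $\mathcal{C}^{\otimes k}$ (whose ``bits'' are the parities $x_{i_1}\oplus\cdots\oplus x_{i_k}$), applies the probabilistic method to the squared amplitudes on $k$-tuples to extract a hitting set of $O(\log m)$ such parity indices, and only then pays the factor of $k$ by simulating each parity query with $k$ classical bit queries (Lemma \ref{lem:simulate}). You pay the factor of $k$ earlier and in a different place: from the same inner-product identity $\langle\psi|O_{x\oplus y}^{\otimes k}|\psi\rangle = 1-2\Pr[\text{odd intersection}]$ that drives the paper's Lemma \ref{lem:onequery}, you observe that an odd intersection is in particular nonempty, union-bound over the $k$ query registers, and thereby collapse the $k$-tuple distribution to a single-index distribution $\mu$ on $[n]$ with $\mu(D_{fg}) \ge (1-2\sqrt{\epsilon(1-\epsilon)})/(2k)$; sampling $O(k\log m / (1-2\sqrt{\epsilon(1-\epsilon)}))$ individual positions from $\mu$ then finishes the covering argument. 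The shared core is Fact \ref{fact:distinguish} (your Fuchs--van de Graaf step gives the identical bound $4\epsilon(1-\epsilon)$), the odd-parity computation, and the union bound over $\binom{m}{2}$ pairs. What your version buys is a more direct classical algorithm --- it queries a fixed set of individual input bits and never needs the $\mathcal{C}^{\otimes k}$ construction or parity evaluations --- at the cost of discarding the correlations among the $k$ coordinates via the coordinate-wise union bound; the paper's version is more modular, isolating a reusable one-query lemma. Both yield exactly the same final constant. (Two trivial points to tidy: the marginals of your $\mu$ may place mass on the dummy index $0$, which never lies in any $D_{fg}$ --- this only weakens $\mu(D_{fg})$ in the harmless direction and is already absorbed by your bound; and $T$ should be rounded up to an integer, which the slack from $\ln m \le \log_2 m$ comfortably accommodates.)
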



\subsection{Related work}

We note that the question of putting lower bounds on nonadaptive quantum query algorithms has been studied previously. First, Zalka has obtained a tight lower bound on the nonadaptive quantum query complexity of the unordered search problem, which is a particular learning problem \cite{zalka99}. Second, in \cite{nishimura04}, Nishimura and Yamakami give lower bounds on the nonadaptive quantum query complexity of a multiple-block variant of the ordered search problem. Finally, Koiran et al~\cite{koiran08} develop the weighted adversary argument of Ambainis~\cite{ambainis06} to obtain lower bounds that are specific to the nonadaptive setting. Unlike the situation considered here, their bounds also apply to quantum algorithms for computing partial functions.

We now turn to proving the new results: nonadaptive computation in Section \ref{sec:noncomp}, and nonadaptive learning in Section \ref{sec:nonlearn}.


\section{Nonadaptive quantum query complexity of computation}
\label{sec:noncomp}

Let $\mathcal{A}$ be a nonadaptive quantum query algorithm. We will use what is essentially the standard model of quantum query complexity \cite{hoyer05}. $\mathcal{A}$ is given access to the input $x=x_1\dots x_n$ via an oracle $O_x$ which acts on an $n+1$ dimensional space indexed by basis states $\ket{0}, \dots, \ket{n}$, and performs the operation $O_x \ket{i} = (-1)^{x_i} \ket{i}$. We define $O_x \ket{0} = \ket{0}$ for technical reasons (otherwise, $\mathcal{A}$ could not distinguish between $x$ and $\bar{x}$). Assume that $\mathcal{A}$ makes $k$ queries to $O_x$. As the queries are nonadaptive, we may assume they are made in parallel. Therefore, the existence of a nonadaptive quantum query algorithm that computes $f$ and fails with probability $\epsilon$ is equivalent to the existence of an input state $\ket{\psi}$ and a measurement specified by positive operators $\{M_0, I-M_0\}$, such that $\bracket{\psi}{O_x^{\otimes k} M_0 O_x^{\otimes k}}{\psi} \ge 1-\epsilon$ for all inputs $x$ where $f(x) = 0$, and $\bracket{\psi}{O_x^{\otimes k} M_0 O_x^{\otimes k}}{\psi} \le \epsilon$ for all inputs $x$ where $f(x) = 1$.

The intuition behind the proof of Theorem \ref{thm:nonadaptive} is much the same as that behind ``adversary'' arguments lower bounding quantum query complexity \cite{hoyer05}. As in Section \ref{sec:nonadaptintro}, let $e^j$ denote the $n$-bit string which contains a single 1, at position $j$. In order to distinguish two inputs $x$, $x\oplus e^j$ where $f(x) \neq f(x \oplus e^j)$, the algorithm must invest amplitude of $\ket{\psi}$ in components where the oracle gives information about $j$. But, unless $k$ is large, it is not possible to invest in many variables simultaneously.

We will use the following well-known fact from \cite{bernstein97}.

\begin{fact}[Bernstein and Vazirani \cite{bernstein97}]
\label{fact:distinguish}
Imagine there exists a positive operator $M \le I$ and states $\ket{\psi_1}$, $\ket{\psi_2}$ such that $\bracket{\psi_1}{M}{\psi_1} \le \epsilon$, but $\bracket{\psi_2}{M}{\psi_2} \ge 1-\epsilon$. Then $|\ip{\psi_1}{\psi_2}|^2 \le 4\epsilon(1-\epsilon)$.
\end{fact}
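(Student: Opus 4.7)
My plan is to exploit the decomposition $M + (I-M) = I$ to split the inner product $\ip{\psi_1}{\psi_2}$ into two pieces that each admit a Cauchy--Schwarz bound. Writing $\ip{\psi_1}{\psi_2} = \bracket{\psi_1}{M}{\psi_2} + \bracket{\psi_1}{I-M}{\psi_2}$, and using the positive square roots $M^{1/2}$ and $(I-M)^{1/2}$ (which exist because $0 \le M \le I$), the triangle inequality followed by Cauchy--Schwarz applied to the inner products $\ip{M^{1/2}\psi_1}{M^{1/2}\psi_2}$ and $\ip{(I-M)^{1/2}\psi_1}{(I-M)^{1/2}\psi_2}$ gives
\begin{equation*}
|\ip{\psi_1}{\psi_2}| \le \sqrt{\bracket{\psi_1}{M}{\psi_1}\bracket{\psi_2}{M}{\psi_2}} + \sqrt{\bracket{\psi_1}{I-M}{\psi_1}\bracket{\psi_2}{I-M}{\psi_2}}.
\end{equation*}

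Setting $a = \bracket{\psi_1}{M}{\psi_1}$ and $b = \bracket{\psi_2}{M}{\psi_2}$, so that $\bracket{\psi_i}{I-M}{\psi_i}$ equals $1-a$ or $1-b$, the bound becomes $g(a,b) := \sqrt{ab} + \sqrt{(1-a)(1-b)}$. The hypotheses give $a \le \epsilon$ and $b \ge 1-\epsilon$; since the claimed bound $4\epsilon(1-\epsilon)$ is at least $1$ for $\epsilon \ge 1/2$, I may assume $\epsilon < 1/2$ and hence $a \le b$ throughout the feasible region. A short derivative computation shows that $\partial_a g$ has the sign of $b-a$ and $\partial_b g$ has the sign of $a-b$, so on this region $g$ is nondecreasing in $a$ and nonincreasing in $b$. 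The maximum over $a \in [0,\epsilon]$, $b \in [1-\epsilon, 1]$ is therefore attained at the corner $(a,b) = (\epsilon, 1-\epsilon)$, where $g = 2\sqrt{\epsilon(1-\epsilon)}$. Squaring yields the desired bound $|\ip{\psi_1}{\psi_2}|^2 \le 4\epsilon(1-\epsilon)$.

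The argument is entirely elementary, and I do not anticipate a substantive obstacle; the only step requiring a moment's care is the two-variable optimization. An equivalent but less self-contained route would invoke the Helstrom bound directly: the bias $1-2\epsilon$ achievable by the two-outcome measurement $\{M, I-M\}$ is at most the trace distance between the two pure states, which equals $\sqrt{1 - |\ip{\psi_1}{\psi_2}|^2}$, and squaring $(1-2\epsilon) \le \sqrt{1 - |\ip{\psi_1}{\psi_2}|^2}$ gives the same conclusion. I would prefer the Cauchy--Schwarz approach because it is self-contained and makes transparent that the bound is tight precisely at $a=\epsilon$, $b=1-\epsilon$.
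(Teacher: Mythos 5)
The paper does not actually prove this statement: it is imported as a black-box ``Fact'' with a citation to Bernstein and Vazirani, so there is no in-paper argument to compare yours against. Your Cauchy--Schwarz proof is correct and is the standard self-contained derivation: the splitting $\ip{\psi_1}{\psi_2} = \bracket{\psi_1}{M}{\psi_2} + \bracket{\psi_1}{I-M}{\psi_2}$, the two Cauchy--Schwarz bounds via $M^{1/2}$ and $(I-M)^{1/2}$, and the monotonicity analysis of $g(a,b)=\sqrt{ab}+\sqrt{(1-a)(1-b)}$ are all sound (the sign computations for $\partial_a g$ and $\partial_b g$ check out, and the corner value is indeed $2\sqrt{\epsilon(1-\epsilon)}$). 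The one slip is your dismissal of the case $\epsilon \ge 1/2$: you assert that $4\epsilon(1-\epsilon) \ge 1$ there, but in fact $4\epsilon(1-\epsilon) \le 1$ for all $\epsilon$, with equality only at $\epsilon = 1/2$. For $\epsilon > 1/2$ the Fact as literally stated is \emph{false} (take $\ket{\psi_1}=\ket{\psi_2}$ and $M = \tfrac{1}{2}I$, so both hypotheses hold with, say, $\epsilon = 0.6$, yet $|\ip{\psi_1}{\psi_2}|^2 = 1 > 0.96$); the statement must be read with the implicit assumption $\epsilon \le 1/2$, which is the only regime in which the paper invokes it, and in that regime your argument is complete. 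Your alternative route via the Helstrom bound is also valid and gives exactly the same inequality after squaring $1-2\epsilon \le \sqrt{1-|\ip{\psi_1}{\psi_2}|^2}$; the Cauchy--Schwarz version is preferable here precisely because it keeps the paper self-contained and exhibits tightness at $a=\epsilon$, $b=1-\epsilon$.
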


We now turn to the proof itself. Write the input state $\ket{\psi}$ as
\[ \ket{\psi} = \sum_{i_1,\dots,i_k} \alpha_{i_1,\dots,i_k} \ket{i_1,\dots,i_k}, \]
where, for each $m$, $0 \le i_m \le n$. It is straightforward to compute that
\[ O_x^{\otimes k} \ket{i_1,\dots,i_k} = (-1)^{x_{i_1}+\dots+x_{i_k}} \ket{i_1,\dots,i_k}.\]
As $f$ depends on all $n$ inputs, for any $j$, there exists a bit string $x^j$ such that $f(x^j) \neq f(x^j \oplus e^j)$. Then
\[ (O_{x^j} O_{x^j \oplus e^j})^{\otimes k} \ket{i_1,\dots,i_k} = (-1)^{|\{m:i_m=j\}|} \ket{i_1,\dots,i_k}; \]
in other words $(O_{x^j} O_{x^j \oplus e^j})^{\otimes k}$ negates those basis states that correspond to bit strings $i_1,\dots,i_k$ where $j$ occurs an odd number of times in the string. Therefore, we have
\beas
|\bracket{\psi}{(O_{x^j} O_{x^j \oplus e^j})^{\otimes k}}{\psi}|^2 &=& \left(\sum_{i_1,\dots,i_k} |\alpha_{i_1,\dots,i_k}|^2 (-1)^{|\{m:i_m=j\}|} \right)^2\\
&=& \left(1-2 \sum_{i_1,\dots,i_k} |\alpha_{i_1,\dots,i_k}|^2\,[|\{m:i_m=j\}| \mbox{ odd}]\right)^2\\
&=:& (1-2 W_j)^2.
\eeas
Now, by Fact \ref{fact:distinguish}, $(1-2 W_j)^2 \le 4\epsilon(1-\epsilon)$ for all $j$, so
\[ W_j \ge \frac{1}{2}\left(1 - 2 \sqrt{\epsilon(1-\epsilon)}\right). \]
On the other hand,
\beas
\sum_{j=1}^n W_j &=& \sum_{j=1}^n \sum_{i_1,\dots,i_k} |\alpha_{i_1,\dots,i_k}|^2\,[|\{m:i_m=j\}| \mbox{ odd}]\\
&=& \sum_{i_1,\dots,i_k} |\alpha_{i_1,\dots,i_k}|^2 \sum_{j=1}^n\,[|\{m:i_m=j\}| \mbox{ odd}]\\
&\le& \sum_{i_1,\dots,i_k} |\alpha_{i_1,\dots,i_k}|^2\, k = k.
\eeas
Combining these two inequalities, we have
\[ k \ge \frac{n}{2} \left(1 - 2 \sqrt{\epsilon(1-\epsilon)} \right). \]


\section{Nonadaptive quantum query complexity of learning}
\label{sec:nonlearn}

In the case of learning, we use a very similar model to the previous section. Let $\mathcal{A}$ be a nonadaptive quantum query algorithm. $\mathcal{A}$ is given access to an oracle $O_x$, which corresponds to a bit-string $x$ picked from a concept class $\mathcal{C}$. $O_x$ acts on an $n+1$ dimensional space indexed by basis states $\ket{0}, \dots, \ket{n}$, and performs the operation $O_x \ket{i} = (-1)^{x_i} \ket{i}$, with $O_x \ket{0} = \ket{0}$. Assume that $\mathcal{A}$ makes $k$ queries to $O_x$ and outputs $x$ with probability strictly greater than $1/2$ for all $x \in \mathcal{C}$.

We will prove limitations on nonadaptive quantum algorithms in this model as follows. First, we show that a nonadaptive quantum query algorithm that uses $k$ queries to learn $\mathcal{C}$ is equivalent to an algorithm using one query to learn a related concept class $\mathcal{C'}$. We then show that existence of a quantum algorithm using one query that learns $\mathcal{C'}$ with constant success probability greater than $1/2$ implies existence of a deterministic classical algorithm using $O(\log |\mathcal{C'}|)$ queries. Combining these two results gives Theorem \ref{thm:learning}.

\begin{lem}
\label{lem:simulate}
Let $\mathcal{C}$ be a concept class over $n$-bit strings, and let $\mathcal{C}^{\otimes k}$ be the concept class defined by
\[ \mathcal{C}^{\otimes k} = \{ x^{\otimes k} : x \in \mathcal{C} \}, \]
where $x^{\otimes k}$ denotes the $(n+1)^k$-bit string indexed by $0 \le i_1,\dots,i_k \le n$, with $x^{\otimes k}_{i_1,\dots,i_k} = x_{i_1} \oplus \cdots \oplus x_{i_k}$, and we define $x_0=0$. Then, if there exists a classical nonadaptive query algorithm that learns $\mathcal{C}^{\otimes k}$ with success probability $p$ and uses $q$ queries, there exists a classical nonadaptive query algorithm that learns $\mathcal{C}$ with success probability $p$ and uses at most $kq$ queries.
\end{lem}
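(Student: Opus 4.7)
The plan is to directly simulate the hypothetical $\mathcal{C}^{\otimes k}$-learner using queries to the $\mathcal{C}$-oracle. The key observation is that every bit of $x^{\otimes k}$ is a XOR of at most $k$ bits of $x$: by definition $x^{\otimes k}_{i_1,\dots,i_k} = x_{i_1} \oplus \cdots \oplus x_{i_k}$, and because $x_0 = 0$, the indices equal to $0$ contribute nothing. Hence a single query to the $x^{\otimes k}$-oracle can be replaced by at most $k$ queries to the $x$-oracle followed by a XOR.

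First I would set up the reduction. Let $\mathcal{A}'$ be the given classical nonadaptive algorithm for $\mathcal{C}^{\otimes k}$, and let
\[
S = \{(i_1^{(1)},\dots,i_k^{(1)}),\dots,(i_1^{(q)},\dots,i_k^{(q)})\} \subseteq \{0,\dots,n\}^k
\]
be the (possibly randomised) set of indices it queries. I would define the simulating algorithm $\mathcal{A}$ to query the set
\[
T \;=\; \bigcup_{j=1}^{q} \{i_1^{(j)},\dots,i_k^{(j)}\}\setminus\{0\} \;\subseteq\; [n],
\]
which has $|T| \le kq$. Because $\mathcal{A}'$ is nonadaptive, $S$, and therefore $T$, is chosen before any oracle answer is seen (using the same random coins as $\mathcal{A}'$), so $\mathcal{A}$ is also nonadaptive.

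Next I would argue correctness. From the answers to queries in $T$, $\mathcal{A}$ can compute each bit $x^{\otimes k}_{i_1^{(j)},\dots,i_k^{(j)}} = x_{i_1^{(j)}} \oplus \cdots \oplus x_{i_k^{(j)}}$ and feed these simulated answers into the post-processing step of $\mathcal{A}'$. Since the simulation is exact, $\mathcal{A}'$ outputs the correct string $x^{\otimes k}$ with probability at least $p$. Finally, $x$ is recoverable from $x^{\otimes k}$ because, for each $i \in [n]$, $x_i = x^{\otimes k}_{i,0,\dots,0}$; so $\mathcal{A}$ outputs the correct $x$ with probability at least $p$ as well.

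There is no real obstacle here: the statement is essentially a bookkeeping reduction, and the only point requiring mild care is the convention $x_0=0$, which serves two purposes at once. It makes the XOR simulation of an $x^{\otimes k}$-query use at most $k$ rather than exactly $k$ queries to $x$, and it guarantees that $x$ itself appears as a slice of $x^{\otimes k}$, so that learning $x^{\otimes k}$ suffices to learn $x$.
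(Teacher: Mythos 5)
Your proposal is correct and follows essentially the same route as the paper: simulate each query to $x^{\otimes k}$ by at most $k$ queries to $x$ (noting nonadaptivity is preserved since the query set is fixed in advance), and recover $x$ from $x^{\otimes k}$ via the slice $x_i = x^{\otimes k}_{i,0,\dots,0}$, which is exactly the paper's observation that the first $n$ bits of $x^{\otimes k}$ equal $x$. Your write-up is just a more explicit version of the paper's argument.
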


\begin{proof}
Given access to $x$, an algorithm $\mathcal{A}$ can simulate a query of index $(x_1,\dots,x_k)$ of $x^{\otimes k}$ by using at most $k$ queries to compute $x_1 \oplus \dots \oplus x_k$. Hence, by simulating the algorithm for learning $\mathcal{C}^{\otimes k}$, $\mathcal{A}$ can learn $\mathcal{C}^{\otimes k}$ with success probability $p$ using at most $kq$ nonadaptive queries. Learning $\mathcal{C}^{\otimes k}$ suffices to learn $\mathcal{C}$, because each concept in $\mathcal{C}^{\otimes k}$ uniquely corresponds to a concept in $\mathcal{C}$ (to see this, note that the first $n$ bits of $x^{\otimes k}$ are equal to $x$).
\end{proof}

\begin{lem}
\label{lem:onequery}
Let $\mathcal{C}$ be a concept class containing $m$ concepts. Assume that $\mathcal{C}$ can be learned using one quantum query by an algorithm that fails with probability at most $\epsilon$, for some $\epsilon < 1/2$. Then there exists a classical algorithm that uses at most $(4\log_2 m) / (1 - 2\sqrt{\epsilon(1-\epsilon)})$ queries and learns $\mathcal{C}$ with certainty.
\end{lem}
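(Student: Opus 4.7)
The plan is to extract from the one-query quantum algorithm a probability distribution on input coordinates that captures pairwise differences between concepts, and then use the probabilistic method to turn it into a small classical separating set of queries.

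First I would invoke Fact \ref{fact:distinguish} pairwise. Writing the single-query input state as $\ket{\psi}=\sum_{i=0}^n\alpha_i\ket{i}$ and letting $\{M_x\}_{x\in\mathcal{C}}$ be the POVM elements naming each outcome, the hypothesis of error at most $\epsilon$ gives $\bracket{\psi}{O_x M_x O_x}{\psi}\ge 1-\epsilon$ for every $x\in\mathcal{C}$; for any $y\neq x$, the inequality $M_x\le I-M_y$ then forces $\bracket{\psi}{O_y M_x O_y}{\psi}\le\epsilon$. Applying Fact \ref{fact:distinguish} to the states $O_x\ket{\psi}$, $O_y\ket{\psi}$ with operator $M_x$ yields $|\bracket{\psi}{O_x O_y}{\psi}|^2\le 4\epsilon(1-\epsilon)$ for every pair $x\neq y$ in $\mathcal{C}$.

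Next I would unpack this combinatorially. Since $\bracket{\psi}{O_x O_y}{\psi}=\sum_{i=0}^n|\alpha_i|^2(-1)^{x_i\oplus y_i}=1-2\sum_{i:x_i\neq y_i}|\alpha_i|^2$ (using $x_0=y_0=0$), setting $p_i:=|\alpha_i|^2$ and $\delta:=\tfrac{1}{2}(1-2\sqrt{\epsilon(1-\epsilon)})>0$ gives $\sum_{i\in D_{xy}}p_i\ge\delta$ for every pair $x\neq y$ in $\mathcal{C}$, where $D_{xy}:=\{i\in[n]:x_i\neq y_i\}$. The amplitudes of $\ket{\psi}$ thus define a probability distribution $p$ on $\{0,1,\dots,n\}$ that places mass at least $\delta$ on every ``difference set.''

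Third, I would apply the probabilistic method to construct a small separating subset $S\subseteq[n]$, meaning $S\cap D_{xy}\neq\emptyset$ for every pair $x\neq y$ in $\mathcal{C}$. Drawing $t$ indices $i_1,\dots,i_t$ independently from $p$, the probability that none lies in a given $D_{xy}$ is at most $(1-\delta)^t\le e^{-\delta t}$, so the expected number of uncovered pairs is at most $\binom{m}{2}e^{-\delta t}$. Choosing $t=\lceil(4\log_2 m)/(1-2\sqrt{\epsilon(1-\epsilon)})\rceil=\lceil 2(\log_2 m)/\delta\rceil$ makes $\delta t\ge 2\ln m/\ln 2>2\ln m-\ln 2\ge\ln\binom{m}{2}$, so the expectation is strictly below one and a separating multiset of size $t$ exists; the nonadaptive classical algorithm queries each distinct index in $S\cap[n]$ in parallel and returns the unique concept in $\mathcal{C}$ consistent with the observed bits, which exists and is unique precisely by the separating property.

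The one delicate point I anticipate is the bookkeeping in the third step: one has to thread the union bound together with the conversion from natural to binary logarithms tightly enough to match the stated constant $4$, but the factor $1/\ln 2>1$ provides more than enough slack. Otherwise the argument is a fairly standard probabilistic hitting-set construction, with all of the quantum content compressed into the derivation of the density $p$ in the first two steps.
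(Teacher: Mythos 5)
Your proposal is correct and follows essentially the same route as the paper: apply Fact \ref{fact:distinguish} pairwise to get $|\bracket{\psi}{O_xO_y}{\psi}|^2\le 4\epsilon(1-\epsilon)$, read off that the distribution $|\alpha_i|^2$ puts mass at least $\tfrac12-\sqrt{\epsilon(1-\epsilon)}$ on every difference set, and build a separating set by sampling from that distribution with a union bound over the $\binom{m}{2}$ pairs. The only (immaterial) difference is that you pass through $(1-\delta)^t\le e^{-\delta t}$ and natural logarithms where the paper bounds $\bigl(\tfrac12+\sqrt{\epsilon(1-\epsilon)}\bigr)^k$ directly in base $2$; both yield the stated constant.
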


\begin{proof}
Associate each concept with an $n$-bit string, for some $n$, and suppose there exists a quantum algorithm that uses one query to learn $\mathcal{C}$ and fails with probability $\epsilon < 1/2$. Then by Fact \ref{fact:distinguish} there exists an input state $\ket{\psi} = \sum_{i=0}^n \alpha_i \ket{i}$ such that, for all $x \neq y \in \mathcal{C}$,
\[ |\bracket{\psi}{O_x O_y}{\psi}|^2 \le 4\epsilon(1-\epsilon), \]
or in other words
\be
\label{eqn:almostorth}
\left( \sum_{i=0}^n |\alpha_i|^2 (-1)^{x_i + y_i} \right)^2 \le 4\epsilon(1-\epsilon). \ee
We now show that, if this constraint holds, there must exist a subset of the inputs $S \subseteq [n]$ such that every pair of concepts in $\mathcal{C}$ differs on at least one input in $S$, and $|S| = O(\log m)$. By the argument of Section \ref{sec:nonadaptintro}, this implies that there is a nonadaptive classical algorithm that learns $M$ with certainty using $O(\log m)$ queries.

We will use the probabilistic method to show the existence of $S$. For any $k$, form a subset $S$ of at most $k$ inputs between 1 and $n$ by a process of $k$ random, independent choices of input, where at each stage input $i$ is picked to add to $S$ with probability $|\alpha_i|^2$. Now consider an arbitrary pair of concepts $x \neq y$, and let $S^+$, $S^-$ be the set of inputs on which the concepts are equal and differ, respectively. By the constraint (\ref{eqn:almostorth}), we have
\[ 4\epsilon(1-\epsilon) \ge \left( \sum_{i=0}^n |\alpha_i|^2 (-1)^{x_i + y_i} \right)^2 = \left( \sum_{i\in S^+} |\alpha_i|^2 - \sum_{i\in S^-} |\alpha_i|^2 \right)^2 = \left( 1 - 2\sum_{i\in S^-} |\alpha_i|^2 \right)^2, \]
so
\[ \sum_{i\in S^-} |\alpha_i|^2 \ge \frac{1}{2} - \sqrt{\epsilon(1-\epsilon)}. \]
Therefore, at each stage of adding an input to $S$, the probability that an input in $S^-$ is added is at least $\frac{1}{2} - \sqrt{\epsilon(1-\epsilon)}$. So, after $k$ stages of doing so, the probability that none of these inputs has been added is at most $\left(\frac{1}{2} + \sqrt{\epsilon(1-\epsilon)} \right)^k$. As there are $\binom{m}{2}$ pairs of concepts $x \neq y$, by a union bound the probability that none of the pairs of concepts differs on any of the inputs in $S$ is upper bounded by
\[ \binom{m}{2}\left(\frac{1}{2} + \sqrt{\epsilon(1-\epsilon)}\right)^k \le m^2 \left(\frac{1}{2} + \sqrt{\epsilon(1-\epsilon)}\right)^k. \]
For any $k$ greater than
\[ \frac{2 \log_2 m}{\log_2 2/(1+2\sqrt{\epsilon(1-\epsilon)})} < \frac{4\log_2 m}{1 - 2\sqrt{\epsilon(1-\epsilon)}} \]
this probability is strictly less than 1, implying that there exists some choice of $S \subseteq [n]$ with $|S| \le k$ such that every pair of concepts differs on at least one of the inputs in $S$. This completes the proof.
\end{proof}

We are finally ready to prove Theorem \ref{thm:learning}, which we restate for clarity.

\begin{thm*}
Let $\mathcal{C}$ be a concept class containing $m$ concepts, and let $\mathcal{A}$ be a nonadaptive quantum query algorithm that uses $k$ queries to the input to learn $\mathcal{C}$, and succeeds with probability at least $1-\epsilon$ on every input, for some $\epsilon < 1/2$. Then there exists a classical nonadaptive query algorithm that learns $\mathcal{C}$ with certainty using at most
\[ \frac{4k\log_2 m}{1 - 2\sqrt{\epsilon(1-\epsilon)}} \]
queries to the input.
\end{thm*}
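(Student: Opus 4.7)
The plan is to combine Lemma \ref{lem:simulate} and Lemma \ref{lem:onequery} via the observation that a $k$-query nonadaptive quantum algorithm for $\mathcal{C}$ is literally a single-query quantum algorithm for the tensored concept class $\mathcal{C}^{\otimes k}$. Concretely, for the input state $\ket{\psi} = \sum_{i_1,\dots,i_k} \alpha_{i_1,\dots,i_k} \ket{i_1,\dots,i_k}$ and any $x\in\mathcal{C}$, we have
\[ O_x^{\otimes k} \ket{i_1,\dots,i_k} = (-1)^{x_{i_1} \oplus \cdots \oplus x_{i_k}}\ket{i_1,\dots,i_k} = (-1)^{x^{\otimes k}_{i_1,\dots,i_k}} \ket{i_1,\dots,i_k}, \]
where $x^{\otimes k}$ is the string of length $(n+1)^k$ defined in Lemma \ref{lem:simulate}. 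Thus $O_x^{\otimes k}$ is exactly the single-query oracle for $x^{\otimes k}$ on an $(n+1)^k$-dimensional index space, and $\mathcal{A}$ is a one-query quantum learner for the concept class $\mathcal{C}^{\otimes k}$ that succeeds with probability $1-\epsilon$ on every input.

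Next I would apply Lemma \ref{lem:onequery} to $\mathcal{C}^{\otimes k}$. The map $x \mapsto x^{\otimes k}$ is a bijection from $\mathcal{C}$ onto $\mathcal{C}^{\otimes k}$ (since, as noted in the proof of Lemma \ref{lem:simulate}, the first $n$ index positions of $x^{\otimes k}$ recover $x$), so $|\mathcal{C}^{\otimes k}| = m$. Lemma \ref{lem:onequery} therefore yields a deterministic classical nonadaptive algorithm that learns $\mathcal{C}^{\otimes k}$ with certainty using at most $\tfrac{4 \log_2 m}{1-2\sqrt{\epsilon(1-\epsilon)}}$ queries.

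Finally, I would feed this classical algorithm into Lemma \ref{lem:simulate}, which simulates each query to $x^{\otimes k}$ by at most $k$ queries to $x$, producing a classical nonadaptive algorithm that learns $\mathcal{C}$ with certainty using at most
\[ \frac{4k\log_2 m}{1 - 2\sqrt{\epsilon(1-\epsilon)}} \]
queries, which is exactly the claimed bound.

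The only step that requires any real thought is the first one, namely recognising $O_x^{\otimes k}$ as a bona fide single-query oracle for $x^{\otimes k}$; once this identification is made, the theorem is essentially a one-line composition of the two lemmas. There is no substantial obstacle, because the lemmas have already been designed precisely to snap together in this way: Lemma \ref{lem:onequery} bounds the cost of classically simulating a one-query quantum learner, and Lemma \ref{lem:simulate} converts a classical learner for $\mathcal{C}^{\otimes k}$ back into one for $\mathcal{C}$ at a multiplicative cost of $k$. The fact that $|\mathcal{C}^{\otimes k}|=m$ (rather than something much larger like $(n+1)^k$) is what ensures that the $\log$ factor in the bound is $\log_2 m$ rather than $\log_2 |\mathcal{C}^{\otimes k}|$ blowing up with $k$.
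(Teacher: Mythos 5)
Your proposal is correct and follows exactly the paper's own proof: identify $O_x^{\otimes k}$ as a one-query oracle for the concept class $\mathcal{C}^{\otimes k}$ (which still has $m$ concepts), apply Lemma \ref{lem:onequery} to get a classical learner for $\mathcal{C}^{\otimes k}$, and then apply Lemma \ref{lem:simulate} to pull this back to $\mathcal{C}$ at a multiplicative cost of $k$. If anything, your bookkeeping is slightly cleaner than the paper's, which attaches the factor of $k$ to the Lemma \ref{lem:onequery} step rather than to the simulation step where it actually arises.
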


\begin{proof}
Let $O_x$ be the oracle operator corresponding to the concept $x$. Then a nonadaptive quantum algorithm $\mathcal{A}$ that learns $x$ using $k$ queries to $O_x$ is equivalent to a quantum algorithm that uses one query to $O_x^{\otimes k}$ to learn $x$. It is easy to see that this is equivalent to $\mathcal{A}$ in fact using one query to learn the concept class $\mathcal{C}^{\otimes k}$. By Lemma \ref{lem:onequery}, this implies that there exists a classical algorithm that uses at most $(4k\log_2 m) / (1 - 2\sqrt{\epsilon(1-\epsilon)})$ queries to learn $\mathcal{C}^{\otimes k}$ with certainty. Finally, by Lemma \ref{lem:simulate}, this implies in turn that there exists a classical algorithm that uses the same number of queries and learns $\mathcal{C}$ with certainty.
\end{proof}


\section*{Acknowledgements}

I would like to thank Aram Harrow and Dan Shepherd for helpful discussions and comments on a previous version. This work was supported by the EC-FP6-STREP network QICS and an EPSRC Postdoctoral Research Fellowship.



	
\end{document}